\newcommand{\indep}{\mathop{\perp\!\!\!\!\perp}}
\def\BibTeX{{\rm B\kern-.05em{\sc i\kern-.025em b}\kern-.08em
    T\kern-.1667em\lower.7ex\hbox{E}\kern-.125emX}}
\newtheorem{thm}{Theorem}    
\begin{document}

\title{Heterogeneous Treatment Effect Estimation based on a Partially Linear Nonparametric Bayes Model
\thanks{This research is partially supported by the research grant from the Public Foundation of Chubu Science and Technology Center and No. 19K12128 of Grant-in-Aid for Scientific Research Category (C) and No. 18H03642 of Grant-in-Aid for Scientific Research Category (A), Japan Society for the Promotion of Science.}
}

\author{\IEEEauthorblockN{Shunsuke Horii}
\IEEEauthorblockA{Waseda University\\
1-6-1, Nishiwaseda, Shinjuku-ku, \\
Tokyo 169-8050, Japan\\
Email: s.horii@waseda.jp}
}

\maketitle

\begin{abstract}
Recently, conditional average treatment effect (CATE) estimation has been attracting much attention due to its importance in various fields such as statistics, social and biomedical sciences.
This study proposes a partially linear nonparametric Bayes model for the heterogeneous treatment effect estimation.
A partially linear model is a semiparametric model that consists of linear and nonparametric components in an additive form.
A nonparametric Bayes model that uses a Gaussian process to model the nonparametric component has already been studied.
However, this model cannot handle the heterogeneity of the treatment effect.
In our proposed model, not only the nonparametric component of the model but also the heterogeneous treatment effect of the treatment variable is modeled by a Gaussian process prior.
We derive the analytic form of the posterior distribution of the CATE and prove that the posterior has the consistency property.
That is, it concentrates around the true distribution.
We show the effectiveness of the proposed method through numerical experiments based on synthetic data.
\end{abstract}

\section{Introduction}
The problem of statistical causal inference, which is the estimation of the magnitude of the effect of changing the value of a treatment variable on the outcome variable, is one of the most important problems in data science.
Fisher's randomized controlled trial is a standard method for statistically estimating causal effects \cite{fisher1951design}.
However, randomized controlled trials are often difficult to implement due to cost and ethical considerations.
There is a growing demand for statistical causal inference in observational studies that do not involve experiments.

To estimate causal effects statistically in observational studies, it is necessary to make some assumptions about the data generation process and define causal effects mathematically.
In this study, we consider causal effect estimation based on the Neyman-Rubin potential outcomes model\cite{imbens2015causal}.
In the potential outcomes model, the distribution of the outcome variable is considered for each value of the treatment variable, and the causal effect is defined based on these distributions.
In this study, the treatment variable $T$ is assumed to take two values, $0$ and $1$.
The group of individuals whose treatment variable value is $0$ is called the control group, and the group of individuals whose treatment variable value is $1$ is called the treatment group.
Potential outcome variables are represented by the randome variables $Y^{(t)}, t=0,1$, where $t$ is the value of the treatment variable $T$.
In practice, only a value of either $0$ or $1$ for the treatment variable can be assigned to an individual, and we cannot observe the potential outcome of the unassigned treatment value.
Thus, further assumptions are needed to estimate the causal effects in the potential outcomes model.
A well-known assumption is referred to as \textit{strong ignorability} or \textit{non-confounding}, which assumes that there are covariates $X$ so that $T$ and $(Y^{(0}, Y^{(1)})$ are conditionally independent given $X$.

One of the causal effects treated in the potential outcomes model is the average treatment effect (ATE), which is defined as $\mathrm{E}[Y^{(1)}-Y^{(0)}]$.
There are various methods to estimate ATE, see \cite{imbens2015causal} and references therein.
On the other hand, the conditional average treatment effect (CATE), defined as $\mathrm{E}[Y^{(1)}-Y^{(0)}|X]$, is a causal effect that focuses on the heterogeneity of treatment effects caused by differences in the characteristics of individuals.
In recent years, research on methods for estimating CATE has been attracting much attention\cite{wager2018estimation,chernozhukov2018double, nie2021quasi, alaa2018bayesian}.

Under the strong ignorability, the estimation of ATE and CATE is reduced to the estimation of $\mathrm{E}[Y|T=t, X], (t=0,1)$.
Although we can estimate $\mathrm{E}[Y|T, X]$ by using nonparametric methods, the sample size required for estimation would become large, especially when $X$ is high-dimensional or $\mathrm{E}[Y|T, X]$ is not smooth.
On the other hand, when $p(Y|T, X)$ is modeled by a parametric model such as a linear regression model, the model may fail to fit the data due to too strong assumptions.
In this study, we focus on the partially linear model, which combines the linear regression model and the nonparametric regression model.
The partially linear model is more expressive than the linear regression model due to the inclusion of a nonparametric component.
It also can estimate parameters with higher accuracy than the full nonparametric model.
In this study, we take a Bayesian approach to estimate the causal effects in the partially linear model.
When the objective is to estimate the ATE or CATE, the nonparametric part of the model is considered a nuisance parameter.
The Bayesian approach allows us to marginalize the nuisance parameter.
In \cite{choi2015partially}, a partially linear model, in which the nonparametric component is modeled as a Gaussian process, is proposed.
Although this model has the advantage that the posterior distribution of ATE can be calculated analytically, it cannot handle the heterogeneity of causal effects because $\mathrm{E}[Y|T, X]$ is not dependent on $X$.
This study proposes a model in which the heterogeneous treatment effect is also modeled by a Gaussian process.
We show that the posterior distribution of CATE can be calculated analytically when the parameters of the Gaussian process are known.
We also prove the consistency of the posterior under some assumptions.

This paper is organized as follows.
In section 2, we describe the potential outcomes model and the definitions of ATE and CATE under the model.
We also explain some conditions, including strong ignorability, which enable us to estimate ATE and CATE.
Section 3 describes the partially linear model proposed in this study.
In section 4, we derive the posterior distribution of CATE under the proposed model.
Section 5 shows that the posterior has the consistency property under some assumptions.
Section 6 discusses some earlier studies that are closely related to this work.
In section 7, we verify that the proposed method can estimate CATE accurately by numerical experiments, and in section 8, we conclude and discuss future works.

\section{Potential Outcomes Model and Treatment Effects}

This study aims to investigate the causal effect of the binary treatment variable $T\in\left\{0, 1\right\}$ on the outcome variable $Y\in\mathbb{R}$.
We consider the Neyman-Rubin potential outcomes model to define the causal effect mathematically.
Let $Y^{(t)}, t=0, 1$ be the potential outcome variables where $Y^{(t)}$ is the random variable that represents the outcome variable when $T=t$.
As its name suggests, we assume that we cannot observe the potential outcomes $(Y^{(0)}, Y^{(1)})$ directly.
The observable outcome variable $Y$ is defined as $Y=TY^{(1)}+(1-T)Y^{(0)}$.
The average treatment effect (ATE) is defined as 
\begin{align}
    ATE=\mathrm{E}\left[Y^{(1)}-Y^{(0)}\right],
\end{align}
Assuming that $(Y^{(0)}_{i}, Y^{(1)}_{i}, T_{i}), i=1,\ldots,n$ is i.i.d. with the distribution $\mathbb{P}(Y^{(0)}, Y^{(1)}, T)$, consider the problem of estimating ATE from sample $(T_{i}, Y_{i}), i=1,\ldots, n$ with sample size $n$.
If $T$ and $(Y^{(0)}, Y^{(1)})$ are independent, then
\begin{align}
    \frac{1}{\left|\left\{i:T_{i}=1\right\}\right|}\sum_{i:T_{i}=1}Y_{i}, \label{simple_ATE_estimator}
\end{align}
is a consistent estimator of ATE under some mild conditions.
However, since $T$ and $(Y^{(0)}, Y^{(1)})$ are not necessarily independent, the estimator (\ref{simple_ATE_estimator}) is a biased estimator of ATE in general.

In addition to $T, Y$, we assume that there are some other covariates $X\in\mathcal{X}\subseteq \mathbb{R}^{d}$.
If $T$ and $(Y^{(0)}, Y^{(1)})$ are conditionally independent under $X$, i.e., $Y^{(0)}, Y^{(1)}\indep T|X$, then $T$ is said to satisfy the strongly ignorable condition \cite{imbens2015causal}.
Furthemore, we assume that $X$ and $T$ satisfy the overlap condition, that is, $0<\mathrm{Pr}(T=t|X=x)<1$ for all $t\in\left\{0,1\right\}, x\in\mathcal{X}$.
When the strongly ignorable condition and overlap condition are satisfied, it holds
\begin{align}
    &\mathrm{E}[Y^{(1)}-Y^{(0)}]=\mathrm{E}[Y^{(1)}]-\mathrm{E}[Y^{(0)}]\nonumber\\
    &=\mathrm{E}_{X}[\mathrm{E}[Y^{(1)}|X]]-\mathrm{E}_{X}[\mathrm{E}[Y^{(0)}|X]]\nonumber\\
    &=\mathrm{E}_{X}[\mathrm{E}[Y|X, T=1]]-\mathrm{E}_{X}\mathrm{E}[Y|X, T=0]],
\end{align}
where the third line comes from the conditional independence $Y^{(0)}, Y^{(1)}\indep T|X$.
Thus, the ATE can be estimated by estimating $\mathrm{E}_{X}[\mathrm{E}[Y|X, T=1]]$ and $\mathrm{E}_{X}[\mathrm{E}[Y|X, T=0]]$ from the sample.

As its name suggests, ATE is the average causal effect of $T$ on $Y$.
On the other hand, there are many problems where the causal effect of $T$ on $Y$ depends on the covariates $X$, and we are interested in the conditional average treatment effect (CATE), which is defined as 
\begin{align}
    CATE(X)=\mathrm{E}[Y^{(1)}-Y^{(0)}|X].
\end{align}
When the strongly ignorable condition is satisfied, it holds
\begin{align}
    \mathrm{E}[Y^{(1)}-Y^{(0)}|X]=\mathrm{E}[Y|X, T=1]-\mathrm{E}[Y|X, T=0].
\end{align}
Therefore, the estimation of CATE also reduces to the estimation of the conditional expectations $\mathrm{E}[Y|X, T=1]$ and $\mathrm{E}[Y|X, T=0]$.

\section{Partially Linear Model}
A possible strategy to estimate ATE and CATE is to first construct an estimator for $\mathrm{E}\left[Y|X, T\right]$.
For realizations $x, t$ of $X, T$, a simple estimator for $\mathrm{E}\left[Y|X=x, T=t\right]$ is given by
\begin{align}
    \frac{1}{|i:X_{i}=x, T=t|}\sum_{i:X_{i}=x, T_{i}=t}Y_{i}.
\end{align}
Although this estimator is consistent in some cases, it is difficult to use when $X$ contains continuous variables or is high-dimensional, or when few or no sample points with the same values exist.
Another way to construct the estimator is to assume a model among the variables $Y, X, T$.
A widely used model is the folloing linear regression model.
\begin{align}
    Y_{i}=\theta T_{i}+\beta^{\top} X_{i}+\varepsilon_{i}, \quad i=1,\ldots, n.
\end{align}
In this case, the estimation of ATE and CATE reduces to the estimation of $\theta$.
If the model can sufficiently express the true relationship among the variables, we can estimate $\theta$ with high accuracy, however, the model may be too simple in some cases.

In this paper, we focus on the partially linear model as a model that fixes the shortcoming of the linear regression model.
A simple partially linear model for $Y, X, T$ is given by
\begin{align}
    Y_{i}=\theta T_{i}+f(X_{i})+\varepsilon_{i},\quad i=1,\ldots,n, \label{PLM}
\end{align}
where $f(\cdot)$ is an unknown nonlinear function from $\mathcal{X}$ to $\mathbb{R}$.
In this study, we assume that the error term $\varepsilon_{i}, i=1,\ldots,n$ is i.i.d. with a normal distribution $\mathcal{N}(0, s_{\varepsilon}^{-1})$.
In this model, ATE and CATE are equal to $\theta$, and CATE is independent of $X$.
This means that the model cannot capture the heterogeneity of the causal effect of $T$ on $Y$.

We consider the following extension of the simple partially linear model.
\begin{align}
    Y_{i}=\theta(X_{i})T_{i}+f(X_{i})+\varepsilon_{i},\quad i=1,\ldots, n,\label{proposed_model}
\end{align}
where $\theta(\cdot)$ is an unknown nonlinear function from $\mathcal{X}$ to $\mathbb{R}$.
In this model, CATE is equal to $\theta(X)$, so the estimation of CATE reduces to the estimation of the function $\theta(\cdot)$.
Note that if $\theta(X)$ is a linear function of $X$, then $\theta(X)T$ represents the second-order interaction between $X$ and $T$, so (\ref{proposed_model}) can be thought of as a model that accounts for nonlinear interaction between $X$ and $T$.

As a model for the nonlinear functions $\theta(\cdot), f(\cdot)$, we use a Gaussian process prior.
We assume that 
\begin{align}
    \theta(\cdot)&\sim \mathcal{GP}(0, C(\cdot,\cdot; \omega_{\theta})),\\
    f(\cdot)&\sim \mathcal{GP}(0, C(\cdot, \cdot; \omega_{f})),
\end{align}
where $C(\cdot, \cdot; \omega)$ is the covariance function defined by the parameter $\omega$.
For example, we can use the following covariance function, known as the Gaussian kernel.
\begin{align}
    C(x_{1}, x_{2}; \beta )=\exp\left\{-\beta||x_{1}-x_{2}||^{2}\right\}.\label{gaussian_kernel}
\end{align}

\section{Derivation of Posterior Predictive Distribution}
Given a sample $\bm{t}=(t_{1},\ldots, t_{n}), \bm{X}=(x_{1},\ldots, x_{n}), \bm{y}=(y_{1},\ldots, y_{n})$, we consider the problem to estimate the value of CATE $(\theta(\tilde{x}_{1}),\ldots,\theta(\tilde{x}_{m}))$ for another sample $\tilde{\bm{X}}=(\tilde{x}_{1},\ldots, \tilde{x}_{m})$.
In this section, we assume that $\bm{t}, \bm{X}, \tilde{\bm{X}}$ are non-random variables and $\bm{y}$ is a random variable.
Let $\bm{\theta}=(\theta(x_{1}),\ldots, \theta(x_{n})),  \tilde{\bm{\theta}}=(\theta(\tilde{x}_{1}),\ldots,\theta(\tilde{x}_{m}))$, $\bm{f}=(f(x_{1}),\ldots,f(x_{n}))$.
Then $p(\bm{\theta}, \tilde{\bm{\theta}}, \bm{f})$ is a multivariate Gaussian distribution with mean $\bm{0}$ and covariance matrix
\begin{align}
    \left(
    \begin{array}{lll}
    \bm{\Phi}_{nn} & \bm{\Phi}_{nm} & \bm{O}\\
    \bm{\Phi}_{nm}^{T} & \bm{\Phi}_{mm} & \bm{O}\\
    \bm{O} & \bm{O} & \bm{\Psi}_{nn}
    \end{array}
    \right),
\end{align}
where $\bm{\Phi}_{nn}, \bm{\Phi}_{mn}, \bm{\Phi}_{mm}, \bm{\Psi}_{nn}$ are defined as 
\begin{align}
    \bm{\Phi}_{nn}&=\left(
    \begin{array}{ccc}
    C(x_{1},x_{1};\omega_{\theta}) & \cdots & C(x_{1}, x_{n};\omega_{\theta})\\
    \vdots & \ddots & \vdots\\
    C(x_{n}, x_{1};\omega_{\theta}) & \cdots & C(x_{n}, x_{n};\omega_{\theta})
    \end{array}
    \right),\\
    \bm{\Phi}_{nm}&=\left(
    \begin{array}{ccc}
    C(x_{1},\tilde{x}_{1};\omega_{\theta}) & \cdots & C(x_{1}, \tilde{x}_{m};\omega_{\theta})\\
    \vdots & \ddots & \vdots\\
    C(x_{n}, \tilde{x}_{1};\omega_{\theta}) & \cdots & C(x_{n}, \tilde{x}_{m};\omega_{\theta})
    \end{array}
    \right),\\
    \bm{\Phi}_{mm}&=\left(
    \begin{array}{ccc}
    C(\tilde{x}_{1},\tilde{x}_{1};\omega_{\theta}) & \cdots & C(\tilde{x}_{1}, \tilde{x}_{m};\omega_{\theta})\\
    \vdots & \ddots & \vdots\\
    C(\tilde{x}_{m}, \tilde{x}_{1};\omega_{\theta}) & \cdots & C(\tilde{x}_{m}, \tilde{x}_{m};\omega_{\theta})
    \end{array}
    \right),\\
    \bm{\Psi}_{nn}&=\left(
    \begin{array}{ccc}
    C(x_{1},x_{1};\omega_{f}) & \cdots & C(x_{1}, x_{n};\omega_{f})\\
    \vdots & \ddots & \vdots\\
    C(x_{n}, x_{1};\omega_{f}) & \cdots & C(x_{n}, x_{n};\omega_{f})
    \end{array}
    \right),
\end{align}
respectively\footnote{$\bm{O}, \bm{I}$ are the zero and identity matrices, respectively.
In this paper, we write $\bm{O}, \bm{I}$ for zero and identity matrices regardless of the matrix size}.
From the model (\ref{proposed_model}), $p(\bm{y}|\bm{\theta}, \bm{f})$ is a multivariate Gaussian distribution with mean $\bm{T}\bm{\theta}+\bm{f}$ and covariance matrix $s_{\varepsilon}^{-1}\bm{I}$, where we set $\bm{T}=\mbox{diag}(\bm{t})$.

First, we show that we can obtain the analytic form of $p(\tilde{\bm{\theta}}|\bm{y})$ given that $\omega_{\theta}, \omega_{f}, s_{\varepsilon}$ are known.
The joint distribution $(\bm{\theta}, \tilde{\bm{\theta}}, \bm{f}, \bm{y})$ is given by
\begin{align}
    p(\bm{\theta}, \tilde{\bm{\theta}}, \bm{f}, \bm{y})= p(\bm{y}|\bm{\theta},\bm{f})p(\bm{\theta}, \tilde{\bm{\theta}}, \bm{f}).\label{joint}
\end{align}
Expanding the exponential part of the equation (\ref{joint}) results in a quadratic form of $(\bm{\theta},\tilde{\bm{\theta}}, \bm{f}, \bm{y})$, which indicates that this joint distribution is a multivariate Gaussian.
If we denote the precision matrix of this joint distribution as $\bm{S}$, then
\begin{multline}
    \bm{S}=\left(
    \begin{array}{lll}
    \bm{\Phi}^{-1} & \bm{O} & \bm{O}\\
    \bm{O} & \bm{\Psi}_{nn}^{-1} & \bm{O}\\
    \bm{O} & \bm{O} & s_{\epsilon}\bm{I}
    \end{array}
    \right)+\\
    \left(
    \begin{array}{cccc}
    s_{\epsilon}\bm{T}^{2} & \bm{O} & s_{\epsilon}\bm{T} & -s_{\epsilon}\bm{T}\\
    \bm{O} & \bm{O} & \bm{O} & \bm{O}\\
    s_{\epsilon}\bm{T} & \bm{O} & s_{\epsilon}\bm{I} & -s_{\epsilon}\bm{I}\\
    -s_{\epsilon}\bm{T} & \bm{O} & -s_{\epsilon}\bm{I} & \bm{O}
    \end{array}
    \right),\label{S}
\end{multline}
where we set
\begin{align}
    \bm{\Phi} = \left(
    \begin{array}{cc}
    \bm{\Phi}_{nn} & \bm{\Phi}_{nm}\\
    \bm{\Phi}_{nm}^{T} & \bm{\Phi}_{mm} 
    \end{array}
    \right).
\end{align}
The covariance matrix is given by $\bm{\Sigma}=\bm{S}^{-1}$.

Let $\Theta=(\bm{\theta}, \tilde{\bm{\theta}}, \bm{f})$ and $\bm{\Sigma}$ be
\begin{align}
    \bm{\Sigma}=\left(
    \begin{array}{cc}
    \bm{\Sigma}_{\Theta\Theta} & \bm{\Sigma}_{\Theta\bm{y}}\\
    \bm{\Sigma}_{\bm{y}\Theta} & \bm{\Sigma}_{\bm{y}\bm{y}}
    \end{array}
    \right).
\end{align}
Then the posterior distribution $p(\bm{\theta}, \tilde{\bm{\theta}}, \bm{f}|\bm{y})$ is also a multivariate Gaussian from the property of multivariate Gaussian distribution and whose mean and covariance matrix is given by
\begin{align}
    \bm{\mu}_{\Theta|\bm{y}}&=\bm{\Sigma}_{\Theta\bm{y}}\bm{\Sigma}_{\bm{yy}}^{-1}\bm{y},\\
\bm{\Sigma}_{\Theta|\bm{y}}&=\bm{\Sigma}_{\Theta\Theta}-\bm{\Sigma}_{\Theta\bm{y}}\bm{\Sigma}_{\bm{yy}}^{-1}\bm{\Sigma}_{\bm{y}\Theta},
\end{align}
see \cite{bishop:2006:PRML}, for example.
Furthermore, let $\bm{M}=\bm{\Sigma}_{\Theta\bm{y}}\bm{\Sigma}_{\bm{yy}}^{-1}$ and 
\begin{align}
    \bm{M}&=\left(
    \begin{array}{c}
    \bm{M}_{\bm{\theta}}\\
    \bm{M}_{\tilde{\bm{\theta}}}\\
    \bm{M}_{\bm{y}}
    \end{array}
    \right),\\
    \bm{\Sigma}_{\Theta|\bm{y}}&=\left(
    \begin{array}{ccc}
    \bm{\Sigma}_{\bm{\theta\theta}|\bm{y}} & \bm{\Sigma}_{\bm{\theta}\tilde{\bm{\theta}}|\bm{y}} & \bm{\Sigma}_{\bm{\theta}\bm{f}|\bm{y}}\\
    \bm{\Sigma}_{\tilde{\bm{\theta}}\bm{\theta}|\bm{y}} & \bm{\Sigma}_{\tilde{\bm{\theta}}\tilde{\bm{\theta}}|\bm{y}} & \bm{\Sigma}_{\tilde{\bm{\theta}}\bm{f}|\bm{y}}\\
    \bm{\Sigma}_{\bm{f}\bm{\theta}|\bm{y}} & \bm{\Sigma}_{\bm{f}\tilde{\bm{\theta}}|\bm{y}} & \bm{\Sigma}_{\bm{f}\bm{f}|\bm{y}},
    \end{array}
    \right).
\end{align}
Then, $p(\tilde{\bm{\theta}}|\bm{y})$ is a multivariate Gaussian distribution with mean $\bm{M}_{\tilde{\bm{\theta}}}\bm{y}$ and covariance matrix $\bm{\Sigma}_{\tilde{\bm{\theta}}\tilde{\bm{\theta}}|\bm{y}}$.
The mean vector is the Bayes optimal estimator in terms of squared error loss function.

When $\omega_{\theta}, \omega_{f}, s_{\varepsilon}$ are unknown, we would assume priors for these variables.
However, the analytic form of the posterior is no longer available in this case.
We can use Markov Chain Monte Carlo (MCMC) methods to obtain an approximation of the posterior.
We can also estimate the value of $\omega_{\theta}, \omega_{f}, s_{\varepsilon}$ by maximizing the marginal likelihood.
Prior distributions are assumed to derive the theoretical property of the posterior (section 5).
However, we take the latter approach for numerical experiments because of its computational cost (section 7).

\section{Theoretical Analysis}
This section proves that the posterior of $\theta$ and $f$ has consistency.
We use some results proved in \cite{ghosal1999posterior}, which proves the posterior consitency of the Gaussian process prior for binary regression problems.
For the posterior to have consistency, several assumptions are needed.
First, we assume that the covariance functions have the form of 
\begin{align}
    C(x_{1}, x_{2}; \tau_{\theta}, \lambda_{\theta})&=\tau_{\theta}^{-1}k_{0}(\lambda_{\theta}x_{1}, \lambda_{\theta}x_{2}),\\
    C(x_{1}, x_{2}; \tau_{f}, \lambda_{f})&=\tau_{f}^{-1}k_{0}(\lambda_{f}x_{1}, \lambda_{f}x_{2}),
\end{align}
where $k_{0}(\cdot,\cdot)$ is a nonsingular covariance kernel and hyper-parameters $\tau_{\theta},\tau_{f}, \lambda_{\theta}, \lambda_{f}$ take positive values.
Let the priors on $\tau_{\theta},\tau_{f}, \lambda_{\theta}$ and $\lambda_{f}$ be $\Pi_{\tau_{\theta}}, \Pi_{\tau_{f}}, \Pi_{\lambda_{\theta}}$ and $\Pi_{\lambda_{f}}$.
To state other assumptions, we introduce some notations.
We assume that the covariates $X$ are independently distributed according to a distribution $\mathbb{P}_{X}$ and it has a density $p(x)$.
We also assume that the treatment variable $T$ is distributed according to a distribution $\mathbb{P}_{X|T}$ and it has a probability function $p(t|x)$ given $x$.
Let $p_{\theta,f}(x, t, y)=p(y|x,t,\theta,f)p(t|x)p(x)$, which is the joint density of $(x,t,y)$ induced by $\theta, f$.
We denote $\Pi$ for the prior distribution of $p_{\theta, f}$ when the parameters of $\theta, f$ are distributed according to $\Pi_{\tau_{\theta}}, \Pi_{\tau_{f}}, \Pi_{\lambda_{\theta}}, \Pi_{\lambda_{f}}$.
We define a subset of joint densities
\begin{align}
    \mathcal{P}_{n, \alpha}&=\left\{p_{\theta, f}: \theta, f\in\mathcal{G}_{n,\alpha}\right\},\\
    \mathcal{G}_{n,\alpha}&=\left\{g:||D^{w}g||_{\infty}<M_{n}, w\le \alpha \right\},
\end{align}
where $D^{w}g$ denotes $(\partial^{w}/\partial^{w_{1}}\ldots\partial^{w_{d}})g(x_{1},\ldots,x_{d}), w=\sum w_{i}$, $\alpha$ is a positive integer and $M_{n}$ is a sequences of real numbers.
Let $\lambda_{\theta,n}, \lambda_{f,n}, \tau_{\theta,n}, \tau_{f,n}$ be sequences that satisfy $\Pi_{\tau_{\theta}}(\tau_{\theta}<\tau_{\theta,n})=e^{-cn}, \Pi_{\lambda_{\theta}}(\lambda_{\theta}>\lambda_{\theta,n})=e^{-cn}, \Pi_{\tau_{f}}(\tau_{f}<\tau_{f,n})=e^{-cn}, \Pi_{\lambda_{f}}(\lambda_{f}>\lambda_{f,n})=e^{-cn}$, for some constant $c$.

We make the following assumptions.
\begin{description}
    \item[(P)] For every fixed $x\in\mathcal{X}$, the covariance kernel $k_{0}(x,\cdot)$ has continuous partial derivatives up to order $2\alpha+2$, where $\alpha$ is a positive integer which satisfies a certain condition described later. The prior $\Pi_{\lambda_{\theta}}, \Pi_{\lambda_{f}}$ for $\lambda_{\theta}, \lambda_{f}$ are fully supported on $(0, \infty)$.
    \item[(C)] The covariate space $\mathcal{X}$ is a bounded subset of $\mathbb{R}^{d}$.
    \item[(T)] The true functions $\theta_{0}$ and $f_{0}$ belong to the reproducing kernel Hilbert space (RKHS) of $k_{0}$.
    \item[(G)] For every $b_{1}>0$ and $b_{2}>0$, there exist sequences $M_{n}, \lambda_{\theta,n}, \lambda_{f,n}, \tau_{\theta,n}$ and $\tau_{f,n}$ that satisfy
    \begin{align}
        M_{n}^{2}\tau_{\theta,n}\lambda_{\theta,n}^{-2}\ge b_{1}n,\quad
        M_{n}^{d\alpha}\le b_{2}n,\\
        M_{n}^{2}\tau_{f,n}\lambda_{f,n}^{-2}\ge b_{1}n,\quad
        M_{n}^{d\alpha}\le b_{2}n.
    \end{align}
\end{description}
Similar assumptions as in assumptions (P), (C), (T) and (G) appear in \cite{ghosal1999posterior}.
To prove the posterior consistency, we need to show that $\Pi(\mathcal{P}_{n,\alpha}^{c})$ is exponentially small and bound the metric entropy of $\mathcal{P}_{n,\alpha}$.
Assumptions (P), (C), (T) and (G) are necessary for this purpose.

Let $D_{n}=\left(X_{i}, T_{i}, Y_{i}\right)_{i=1,\ldots,n}$ and $\mathbb{P}_{0}^{n}$ denote the true distribution of $D_{n}$.
We define the $L_{1}$ metric between $p_{\theta,f}$ and $p_{\theta_{0}, f_{0}}$ as
\begin{multline}
    ||p_{\theta, f}-p_{\theta_{0},f_{0}}||_{L_{1}}=\\
    \sum_{t}\int \int |p_{\theta, f}(x,t,y)-p_{\theta_{0}, f_{0}}(x,t,y)|{\rm d}y{\rm d}x.
\end{multline}

Then the following theorem holds.
\begin{thm}\label{consitency_theorem}
Suppose that assumptions (P), (C), (T) and (G) hold.
Then for any $\epsilon>0$,
\begin{align}
    \Pi\left((\theta, f):||p_{\theta,f}-p_{\theta_{0},f_{0}}||_{L_{1}}>\epsilon\ |\ D_{n}\right)\to 0
\end{align}
with $\mathbb{P}_{0}^{n}$-probability $1$.
\end{thm}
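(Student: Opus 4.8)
The plan is to invoke the general Schwartz-type posterior consistency theorem with sieves, in the form used in \cite{ghosal1999posterior}, for which three ingredients suffice: (i) the prior $\Pi$ charges every Kullback--Leibler neighborhood of $p_{\theta_{0},f_{0}}$; (ii) there is a sieve, namely $\mathcal{P}_{n,\alpha}$, whose complement receives exponentially small prior mass, $\Pi(\mathcal{P}_{n,\alpha}^{c})\le e^{-c'n}$; and (iii) the $L_{1}$ metric entropy of the sieve satisfies $\log N(\delta,\mathcal{P}_{n,\alpha},\|\cdot\|_{L_{1}})\le\beta n$ for $\beta$ arbitrarily small, which by the standard Le Cam--Birg\'{e} construction yields tests of $p_{\theta_{0},f_{0}}$ against $\{p_{\theta,f}\in\mathcal{P}_{n,\alpha}:\|p_{\theta,f}-p_{\theta_{0},f_{0}}\|_{L_{1}}>\epsilon\}$ with exponentially decaying error probabilities. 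Combining (i)--(iii) in the usual way gives the asserted convergence to zero with $\mathbb{P}_{0}^{n}$-probability one.

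For (i) I would use that, because the error variance $s_{\varepsilon}^{-1}$ is fixed and the common factors $p(t\mid x)p(x)$ cancel, the divergence reduces to a mean-squared discrepancy of the regression functions,
\begin{align}
\mathrm{KL}\big(p_{\theta_{0},f_{0}},\,p_{\theta,f}\big)=\frac{s_{\varepsilon}}{2}\,\mathrm{E}_{X,T}\Big[\big((\theta_{0}(X)-\theta(X))T+f_{0}(X)-f(X)\big)^{2}\Big],
\end{align}
and likewise for the KL-variance; since $T$ is bounded this is at most $s_{\varepsilon}\big(\|\theta-\theta_{0}\|_{\infty}^{2}+\|f-f_{0}\|_{\infty}^{2}\big)$. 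By assumption (T) the true functions lie in the RKHS of $k_{0}$; since the topological support of a Gaussian process is the sup-norm closure of its RKHS, and the priors on the scaling hyper-parameters are fully supported by (P), the prior on $(\theta,f)$ assigns positive mass to every product of sup-norm balls around $(\theta_{0},f_{0})$, hence to every KL neighborhood.

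For (ii) I would split $\mathcal{P}_{n,\alpha}^{c}$ according to the hyper-parameters: the event $\{\tau_{\theta}<\tau_{\theta,n}\}\cup\{\lambda_{\theta}>\lambda_{\theta,n}\}$ (and its analogue for $f$) has prior probability $O(e^{-cn})$ by the defining tail bounds, while on the complementary hyper-parameter set I would bound the probability that some derivative $D^{w}\theta$, $w\le\alpha$, exceeds $M_{n}$ in sup-norm. These derivative fields exist up to order $2\alpha+2$ by (P), the domain is bounded by (C), the amplitude scales like $\tau^{-1/2}$ and differentiation contributes powers of $\lambda$; the Borell--TIS inequality then bounds this probability by $\exp(-c\,M_{n}^{2}\tau_{\theta,n}\lambda_{\theta,n}^{-2})$ up to lower-order terms, which is $\le e^{-c'n}$ by choosing $b_{1}$ large in (G). For (iii), the map $(\theta,f)\mapsto p_{\theta,f}$ is Lipschitz from sup-norm into $L_{1}$ because the total variation distance of two Gaussians with common variance is controlled by the difference of their means; hence $\log N(\delta,\mathcal{P}_{n,\alpha},\|\cdot\|_{L_{1}})\lesssim\log N(c\delta,\mathcal{G}_{n,\alpha},\|\cdot\|_{\infty})$, and the classical Kolmogorov entropy bound for a ball of functions on a bounded set with derivatives up to order $\alpha$ bounded by $M_{n}$ gives $\lesssim(M_{n}/\delta)^{d/\alpha}$. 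Since $\alpha\ge1$ and $M_{n}\ge1$ eventually, $M_{n}^{d/\alpha}\le M_{n}^{d\alpha}\le b_{2}n$ by (G), so for a fixed $\delta$ proportional to $\epsilon$ the entropy is $\le\beta n$ with $\beta$ as small as desired.

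I expect the \emph{main obstacle} to be ingredient (ii): producing a concentration bound for the sup-norm of the derivatives of the rescaled Gaussian process sample paths that is uniform over the admissible hyper-parameter range and matches, up to constants, the precise interplay among $M_{n}$, $\tau_{\cdot,n}$ and $\lambda_{\cdot,n}$ imposed in (G). This is exactly where the argument must adapt the estimates of \cite{ghosal1999posterior} to the present model; the additive two-component structure causes no extra difficulty here, since $T$ is bounded and the bounds for $\theta$ and $f$ are obtained separately and then combined.
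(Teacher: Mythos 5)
Your plan is correct and follows essentially the same route as the paper: reduce the theorem to the three conditions of Ghosal--Roy's sieve-based consistency theorem (KL prior positivity, exponentially small prior mass outside $\mathcal{P}_{n,\alpha}$, and a linear-in-$n$ $L_{1}$ entropy bound), bound the KL divergence by sup-norm distances of $(\theta,f)$ to $(\theta_{0},f_{0})$, and transfer sup-norm covers of $\mathcal{G}_{n,\alpha}$ to $L_{1}$ covers of the densities. The only difference is presentational: where you sketch proofs of the GP support property, the derivative tail bound via Borell--TIS, and the Kolmogorov entropy estimate, the paper simply cites the corresponding results (Theorem 4, Lemma 1, and the proof of Theorem 1) of the Ghosal--Roy reference, and uses Pinsker's inequality rather than the Gaussian total-variation bound for the sup-norm-to-$L_{1}$ transfer.
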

\begin{proof}
See the Appendix.
\end{proof}

\section{Related Works}
The study of a partially linear model dates back to the work of Engle et al.\cite{engle1986semiparametric}.
While there are many studies on Bayesian approaches to partially linear models, the most relevant to our research is the work by Choi et al.\cite{choi2015partially}, where a Gaussian process prior is assigned to the nonlinear function $f$ in the model (\ref{PLM}).
As mentioned in the introduction, in this model, it holds $E[Y^{(1)}-Y^{(0)}|X]=E[Y^{(1)}-Y^{(0)}]$.
Thus, it cannot handle the heterogeneity of causal effects.

Another closely related work is the method known as Double/Debiased Machine Learning (DML) by Chernozhukov et al.\cite{chernozhukov2018double}.
In DML, the following model is assumed for $Y, X, T$.
\begin{align}
    Y_{i}&=\theta(X_{i})T_{i}+f(X_{i})+\varepsilon_{i},\quad E[\varepsilon|X_{i}]=0,\\
    T_{i}&=g(X_{i})+\eta_{i},\quad E[\eta|X_{i}]=0. \label{p_t_x}
\end{align}
The difference from the model proposed in our study (\ref{proposed_model}) is that $p(T_{i}|X_{i})$ is also modeled by (\ref{p_t_x}) and no parametric assumption is made on the distribution of $\varepsilon_{i}, \eta_{i}$.
There are many studies on the extensions of DML depending on the function class of $\theta$, and one of particular relevance to our work is the work of Nie et al.\cite{nie2021quasi}.
They consider the case where $\theta$ is an element of an RKHS.

Alaa and van der Schaar proposed to directly model the potential outcomes with nonparametric regression\cite{alaa2018bayesian}, namely,
\begin{align}
    Y^{(t)}_{i}=f_{t}(x_{i})+\varepsilon_{i}, \quad t\in\left\{0,1\right\},
\end{align}
where $f_{t}(\cdot), t=0,1$ are unknown nonlinear functions that follow some prior distribution $\Pi(f_{0},f_{1})$.
A difference between their study and ours is that their study assumes two different regression functions, while our study models interaction between $T$ and $X$ that is linear in $T$ and nonlinear in $X$.

\section{Experiments}
To verify the effectiveness of the proposed method, we compare it with some conventional estimators.
We compare the performance of the proposed method with that of DML using synthetic data.
In the first experiment, we generate data under the following conditions.
\begin{itemize}
    \item The dimension of $\mathcal{X}$ is $d=2$
    \item The functions $\theta$ and $f$ are generated from a Gaussian process prior
    \begin{itemize}
        \item The Gaussian kernel (\ref{gaussian_kernel}) is used for the covariance function
    \end{itemize}
    \item The precision of the noise is $s_{\varepsilon}=1.0$
\end{itemize}

DML need to specify the function class of $\theta$ and the estimation methods of $E[Y|X], E[T|X]$.
We use the following combinations.
\begin{itemize}
    \item Linear function + Linear regression / Logistic regression
    \item Linear function + GP regression / GP classification
    \item RHKS + Linear regression / Logistic regression
    \item RHKS + GP regression / GP classification
\end{itemize}

The proposed method has to determine the hyperparameters of the covariance functions and the precision of the noise.
In the experiments, we estimate them by maximizing the marginal likelihood.

We compare the methods with the mean squared error of CATE $\frac{1}{m}\sum_{i=1}^{m}(\theta(\tilde{x})_{i}-\hat{\theta}_{i})^{2}.$
Figure \ref{fig:error_curve_GP_GP} shows the mean squared error curves of CATE as the functions of the sample size.
We know that the proposed method is optimal in this setting, so it is not surprising that the posterior mean yields the smallest mean squared error.
Another almost trivial result is that since true $\theta$ is a nonlinear function, kernel DML, which assumes that $\theta$ is an element of RKHS, outperforms the linear DML, which assumes linear function for $\theta$.
We can also see that in DML, the estimation methods of the conditional expectations $E[Y|X], E[T|X]$ do not have a significant impact on the performance.
We can also confirm that the mean squared error of linear DML does not approach $0$ as the sample size increases.
This may be because the true $\theta$ is nonlinear and not included in the assumed class of linear functions.

In the second experiment, the functions $\theta$ and $f$ are randomly generated linear functions.
The coefficients of the linear functions are generated from the standard Gaussian distribution, and the other conditions are the same as in the first experiment.
Figure \ref{fig:error_curve_Linear_Linear} shows the experiment result.
As a natural result, linear DML performs better than the proposed method and kernel DML since the true $\theta$ is linear.
Compared to the first experiment, the choice of the estimation methods of the conditional expectations $E[Y|X], E[T|X]$ has a significant impact on the performance.
We think kernel DML with GP regression and GP classification considers the same function classes of $\theta$ and $f$ as the proposed method.
We can see that the proposed method outperforms kernel DML in this case.
Unlike the case of the first experiment, the mean squared errors of al estimators approach $0$ as the sample size increases, although the convergence rates vary.
This may be because the true $\theta$ is linear, and the true function is included in the class of functions assumed by linear DML, which assumes a class of linear functions, and other methods that assume a class of RKHS.

\begin{figure}
    \centering
    \includegraphics[keepaspectratio=True,width=\linewidth]{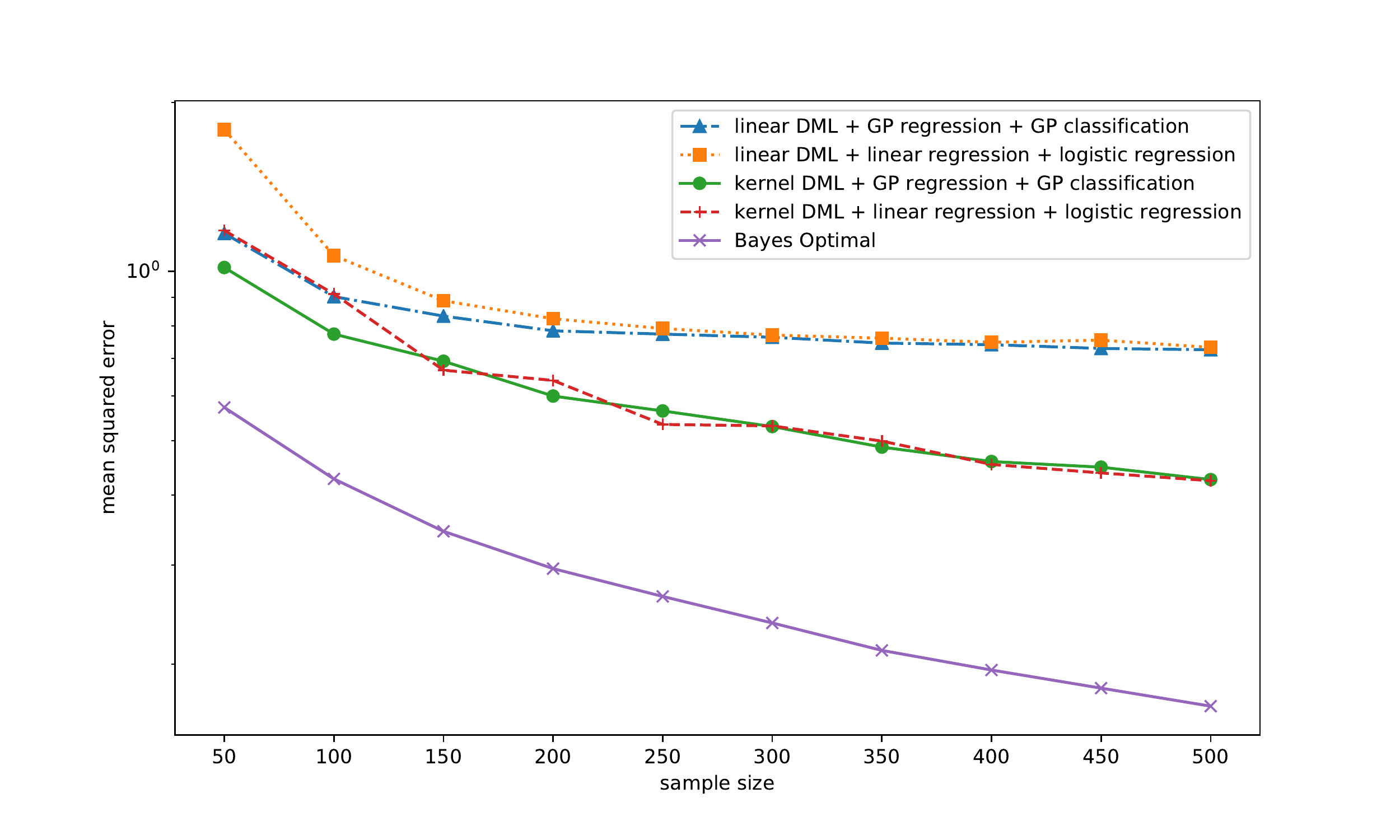}
    \caption{Curves of the mean squared error of CATE as the function of sample size. Functions $\theta$ and $f$ are generated from a Gaussian process prior.}
    \label{fig:error_curve_GP_GP}
\end{figure}

\begin{figure}
    \centering
    \includegraphics[keepaspectratio=True,width=\linewidth]{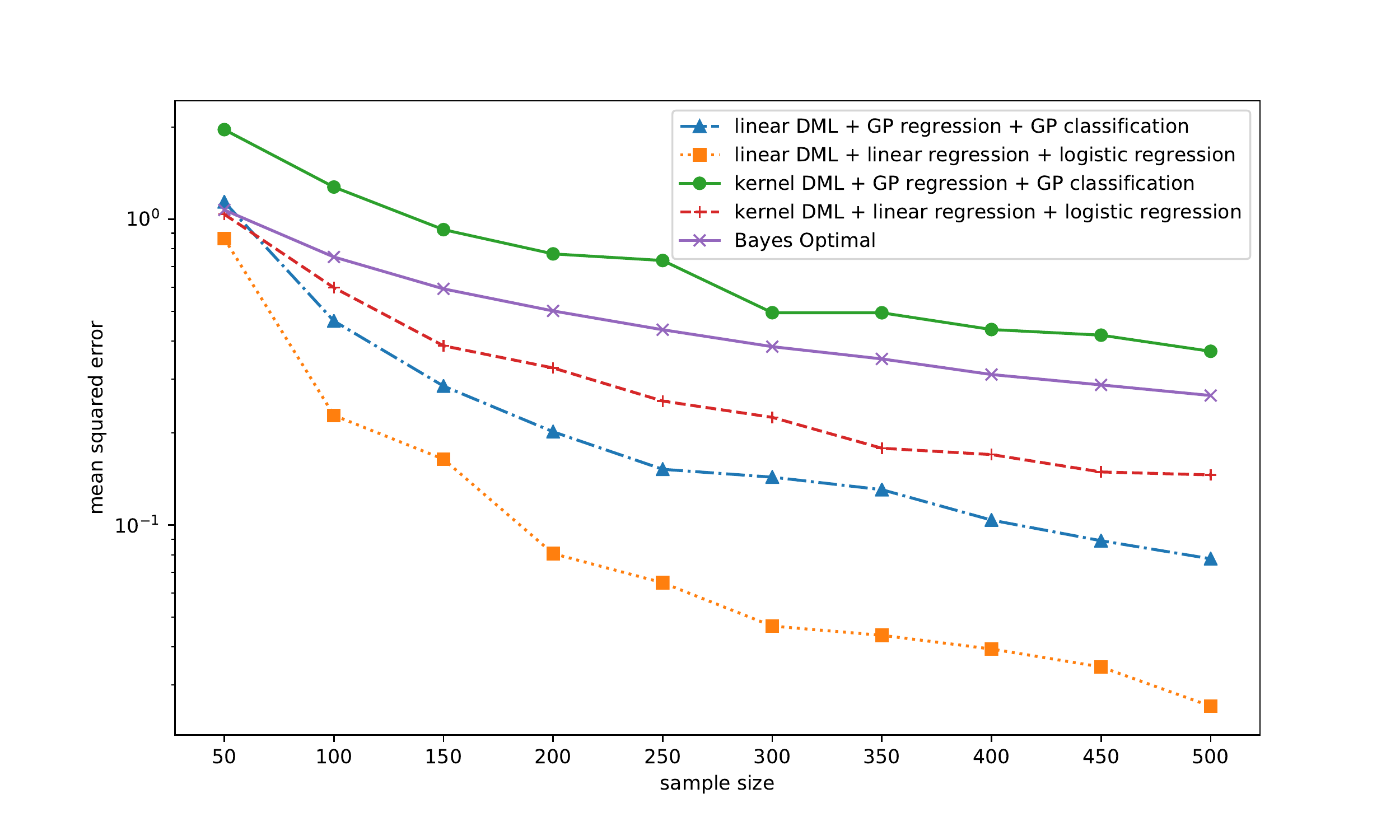}
    \caption{Curves of the mean squared error of CATE as the function of sample size. Functions $\theta$ and $f$ are randomly generated linear funcitons.}
    \label{fig:error_curve_Linear_Linear}
\end{figure}

\section{Conclusion}
In this study, we proposed a partially linear model in which the conditional average treatment effect (CATE) is modeled by a Gaussian process prior and derived the posterior distribution of the CATE.
The CATE is modeled as a nonlinear function that follows a Gaussian process prior, which allows for flexible modeling, while the CATE can be estimated with high accuracy.

We prove that the posterior has consistency under some conditions.
However, the convergence rate and the relationship with the class of nonlinear functions are not yet derived.
In \cite{alaa2018bayesian}, the relationship between the minimax information rate and the assumed class of nonlinear functions is derived for the CATE estimation problem in a model different from our study.
Conducting a similar analysis for our model is future work.

The proposed method requires $O(n^{3})$ of computation for a sample size $n$, and the computation becomes difficult when the sample size becomes large.
We will derive an efficient approximate calculation method in the future.

\newpage

\bibliographystyle{IEEEtran}
\bibliography{ref}

% Generated by IEEEtran.bst, version: 1.14 (2015/08/26)
\begin{thebibliography}{10}
\providecommand{\url}[1]{#1}
\csname url@samestyle\endcsname
\providecommand{\newblock}{\relax}
\providecommand{\bibinfo}[2]{#2}
\providecommand{\BIBentrySTDinterwordspacing}{\spaceskip=0pt\relax}
\providecommand{\BIBentryALTinterwordstretchfactor}{4}
\providecommand{\BIBentryALTinterwordspacing}{\spaceskip=\fontdimen2\font plus
\BIBentryALTinterwordstretchfactor\fontdimen3\font minus
  \fontdimen4\font\relax}
\providecommand{\BIBforeignlanguage}[2]{{%
\expandafter\ifx\csname l@#1\endcsname\relax
\typeout{** WARNING: IEEEtran.bst: No hyphenation pattern has been}%
\typeout{** loaded for the language `#1'. Using the pattern for}%
\typeout{** the default language instead.}%
\else
\language=\csname l@#1\endcsname
\fi
#2}}
\providecommand{\BIBdecl}{\relax}
\BIBdecl

\bibitem{fisher1951design}
R.~A. Fisher, ``The design of experiments,'' 1951.

\bibitem{imbens2015causal}
G.~W. Imbens and D.~B. Rubin, \emph{Causal inference in statistics, social, and
  biomedical sciences}.\hskip 1em plus 0.5em minus 0.4em\relax Cambridge
  University Press, 2015.

\bibitem{wager2018estimation}
S.~Wager and S.~Athey, ``Estimation and inference of heterogeneous treatment
  effects using random forests,'' \emph{Journal of the American Statistical
  Association}, vol. 113, no. 523, pp. 1228--1242, 2018.

\bibitem{chernozhukov2018double}
V.~Chernozhukov, D.~Chetverikov, M.~Demirer, E.~Duflo, C.~Hansen, W.~Newey, and
  J.~Robins, ``Double/debiased machine learning for treatment and structural
  parameters,'' 2018.

\bibitem{nie2021quasi}
X.~Nie and S.~Wager, ``Quasi-oracle estimation of heterogeneous treatment
  effects,'' \emph{Biometrika}, vol. 108, no.~2, pp. 299--319, 2021.

\bibitem{alaa2018bayesian}
A.~M. Alaa and M.~van~der Schaar, ``Bayesian nonparametric causal inference:
  Information rates and learning algorithms,'' \emph{IEEE Journal of Selected
  Topics in Signal Processing}, vol.~12, no.~5, pp. 1031--1046, 2018.

\bibitem{choi2015partially}
T.~Choi and Y.~Woo, ``A partially linear model using a gaussian process
  prior,'' \emph{Communications in Statistics-Simulation and Computation},
  vol.~44, no.~7, pp. 1770--1786, 2015.

\bibitem{bishop:2006:PRML}
C.~M. Bishop, \emph{Pattern Recognition and Machine Learning}.\hskip 1em plus
  0.5em minus 0.4em\relax Springer, 2006.

\bibitem{ghosal1999posterior}
S.~Ghosal, J.~K. Ghosh, and R.~Ramamoorthi, ``Posterior consistency of
  dirichlet mixtures in density estimation,'' \emph{The Annals of Statistics},
  vol.~27, no.~1, pp. 143--158, 1999.

\bibitem{engle1986semiparametric}
R.~F. Engle, C.~W. Granger, J.~Rice, and A.~Weiss, ``Semiparametric estimates
  of the relation between weather and electricity sales,'' \emph{Journal of the
  American statistical Association}, vol.~81, no. 394, pp. 310--320, 1986.

\bibitem{Cover2006}
T.~M. Cover and J.~A. Thomas, \emph{Elements of Information Theory 2nd
  Edition}.\hskip 1em plus 0.5em minus 0.4em\relax Wiley-Interscience, 2006.

\end{thebibliography}

\appendices

\section{Proof of Theorem \ref{consitency_theorem}}
From Theorem 2 in \cite{ghosal1999posterior}, if suffices to verify the following conditions hold:
\begin{itemize}
    \item $\Pi\left((\theta, f):\mbox{KL}(p_{\theta_{0},f_{0}}||p_{\theta,f})<\epsilon\right)>0$ for every $\epsilon>0$, where $\mbox{KL}$ is the Kullback-Leibler (KL) divergence
    \item There exists $\beta>0$ such that $\log N(\epsilon, \mathcal{P}_{n,\alpha}, ||\cdot||_{1})<n\beta$, where $N(\epsilon, \mathcal{P}_{n,\alpha}, ||\cdot||_{1})$ is the covering number (and its logarithm is the metric entropy)
    \item $\Pi(\mathcal{P}^{c}_{n,\alpha})$ is exponentially small
\end{itemize}

We first show that $\Pi\left((\theta, f):\mbox{KL}(p_{\theta_{0},f_{0}}||p_{\theta,f})<\epsilon\right)>0$ for every $\epsilon>0$.
From the definition of KL divergence,
\begin{align}
    &\mathrm{KL}(p_{\theta_{0},f_{0}}||p_{\theta,f})\nonumber\\&=\mathrm{E}_{p_{\theta_{0},f_{0}}}\left[\log \frac{p_{\theta_{0},f_{0}}(X, T, Y)}{p_{\theta,f}(X, T, Y)}\right]\\
    &=\mathrm{E}_{p_{\theta_{0},f_{0}}}\left[\log \frac{p(Y|X,T, \theta_{0},f_{0})}{p(Y|X,T,\theta,f)}\right]\\
    &=\int \int \log\frac{p(y|x,T=1,\theta_{0},f_{0})}{p(y|x,T=1,\theta,f)}p(T=1|x)p(x){\rm d}y{\rm d}x\nonumber\\
    &+\int \int \log\frac{p(y|x,T=0,\theta_{0},f_{0})}{p(y|x,T=0,\theta,f)}p(T=0|x)p(x){\rm d}y{\rm d}x\\
    &\le \int \int \log\frac{p(y|x,T=1,\theta_{0},f_{0})}{p(y|x,T=1,\theta,f)}p(x){\rm d}y{\rm d}x\nonumber\\
    &+\int \int \log\frac{p(y|x,T=0,\theta_{0},f_{0})}{p(y|x,T=0,\theta,f)}p(x){\rm d}y{\rm d}x,\label{sum_KL}
\end{align}
where the last inequality follows from $p(T=t|x)\le 1, t=0,1$.
Let $h_{0}(x)=\theta_{0}(x)+f_{0}(x)$ and $h(x)=\theta(x)+f(x)$.
Then 
\begin{align}
\int yp(y|x,T=0,\theta_{0},f_{0}){\rm d}y=f_{0}(x),\\
\int yp(y|x,T=1,\theta_{0},f_{0}){\rm d}y=h_{0}(x).
\end{align}
Substituting these into (\ref{sum_KL}), some algebra leads to
\begin{align}
  &\mathrm{KL}(p_{\theta_{0},f_{0}}||p_{\theta,f})\nonumber\\
  &\le \frac{s_{\varepsilon}}{2}\left(\int(h_{0}(x)-h(x))^{2}p(x){\rm d}x\right.\nonumber\\
  &\qquad\qquad+\left.\int(f_{0}(x)-f(x))^{2}p(x){\rm d}x\right)\\
  &\le \frac{s_{\varepsilon}}{2}\left(||h_{0}-h||_{\infty}^{2}+||f_{0}-f||_{\infty}^{2}\right)\\
  &\le \frac{s_{\varepsilon}}{2}\left(||\theta_{0}-\theta||_{\infty}^{2}+2||f_{0}-f||_{\infty}^{2}\right)\label{KL_bound}.
\end{align}
From Theorem 4 in \cite{ghosal1999posterior}, it holds $\Pi(\theta: ||\theta_{0}-\theta||_{\infty}<\epsilon)>0$ and $\Pi(f:||f_{0}-f||_{\infty}<\epsilon)>0$, thus $\Pi((\theta,f):\mbox{KL}(p_{\theta_{0},f_{0}}||p_{\theta,f})<\epsilon)>0$.

Next, we bound the covering number $N(\epsilon, \mathcal{P}_{n, \alpha}, ||\cdot||_{1})$.
To simplify the description, we write $N(\epsilon, \mathcal{G}_{n}, ||\cdot||_{\infty})$ as $N_{\epsilon}$.
From the definition of the covering number, we can construct $\theta_{1},\ldots,\theta_{N_{\epsilon}}$, $f_{1},\ldots,f_{N_{\epsilon}}$ that satisfy the following condition:
\begin{itemize}
    \item There exists $i,j\in\left\{1,\ldots,N_{\epsilon}\right\}$ that satisfy $||\theta-\theta_{i}||_{\infty}<\epsilon, ||f-f_{i}||_{\infty}<\epsilon$ for all $\theta, f\in\mathcal{G}_{n}$
\end{itemize}
We choose $\theta,f\in\mathcal{G}_{n}$ arbitrarily and let $\theta^{*}, f^{*}$ be the functions that satisfy the above condition.
Then it holds
\begin{align}
    ||p_{\theta,f}-p_{\theta^{*},f^{*}}||_{1}&\le\sqrt{2\mbox{KL}(p_{\theta,f}||p_{\theta^{*},f^{*}})}\\
    &\le \sqrt{s_{\varepsilon}\left(||\theta-\theta^{*}||_{\infty}^{2}+2||f-f^{*}||_{\infty}^{2}\right)}\\
    &\le \sqrt{3s_{\varepsilon}}\epsilon,
\end{align}
where the first line follows from Pinsker's inequality \cite{Cover2006}, the second line follows from (\ref{KL_bound}), and the last line follows from the definition of $\theta^{*},f^{*}$.
The above inequality implies $N(\sqrt{3s_{\epsilon}}\epsilon, \mathcal{P}_{M_{n},\alpha},||\cdot||_{1})\le N_{\epsilon}^{2}$.
From the proof of Theorem 1 in \cite{ghosal1999posterior}, it holds $\log N(\epsilon, \mathcal{G}_{n}, ||\cdot||_{\infty})\le K\epsilon^{-d/\alpha}b^{d/\alpha}n$, so
\begin{align}
    \log N(\epsilon, \mathcal{P}_{M_{n},\alpha}, ||\cdot||_{1})\le 2K(\epsilon/\sqrt{3s_{\epsilon}})^{-d/\alpha}b^{d/\alpha}n.
\end{align}
By letting $b<(\beta/(2K))^{\alpha/d}(\epsilon/\sqrt{3s_{\epsilon}})$, it holds 
\begin{align}
\log N(\epsilon, \mathcal{P}_{n}, \left||\cdot|\right|_{1})<n\beta.    
\end{align}

Finally, it is easy to verify that $\Pi(\mathcal{P}_{M_{n},\alpha}^{c})$ is exponentially small from Lemma 1 in \cite{ghosal1999posterior}.

\end{document}